\newtheorem{theorem}{Theorem}
\newtheorem{lemma}{Lemma}[section]
\newcommand\wh[1]{\widehat{#1}}
\newcommand{\wt}{\widetilde}
\renewcommand{\P}{\mathbf{P}}
\newcommand{\E}{\mathbf{E}}
\DeclareMathOperator{\1}{\mathbf{1}}
\newcommand{\eps}{\epsilon}
\DeclareMathOperator{\Var}{\mathbf{Var}}
\DeclareMathOperator{\Cov}{\mathbf{Cov}}
\DeclareMathOperator{\Corr}{\mathbf{Corr}}
\DeclareMathSymbol{\shortminus}{\mathbin}{AMSa}{"39}
\begin{document}


\title{Correlation thresholds in the steady states of particle systems and spin glasses
}%

\author{Jacob Calvert}
\email{calvert@gatech.edu}
\affiliation{Institute for Data Engineering and Science, Georgia Institute of Technology, Atlanta, GA, USA}
\affiliation{Santa Fe Institute, Santa Fe, NM, USA}
\author{Dana Randall}
\email{randall@cc.gatech.edu}
\affiliation{School of Computer Science, Georgia Institute of Technology, Atlanta, GA, USA}
\affiliation{Santa Fe Institute, Santa Fe, NM, USA}

\date{\today}

\begin{abstract}
A growing body of theoretical and empirical evidence shows that the global steady-state distributions of many equilibrium and nonequilibrium systems approximately satisfy an analogue of the Boltzmann distribution, with a local dynamical property of states playing the role of energy. The correlation between the effective potential of the steady-state distribution and the logarithm of the exit rates determines the quality of this approximation. We demonstrate and explain this phenomenon in a simple one-dimensional particle system and in random dynamics of the Sherrington--Kirkpatrick spin glass by providing the first explicit estimates of this correlation. We find that, as parameters of the dynamics vary, each system exhibits a threshold above and below which the correlation dramatically differs. We explain how these thresholds arise from underlying transitions in the relationship between the local and global ``parts'' of the effective potential.
\end{abstract} 

\maketitle


\section{Introduction}\label{sec:intro}

The steady-state distributions of many physical systems favor a relatively small number of special states. For example, this is true of proteins, which tend to adopt conformations of low energy \cite{Dill1990}, as well as active matter systems, like swarms of interacting robots that sustain long-lasting, near-periodic ``dances'' \cite{Chvykov2021}. For proteins, and for systems in thermal equilibrium in general, the Boltzmann distribution explains that states with lower energy are exponentially favored. More precisely, the steady-state probability $\pi (x)$ of a state $x$ satisfies
\begin{equation}\label{eq:boltzmann}
\pi (x) = e^{-\beta H(x)} / Z,
\end{equation}
in terms of inverse temperature $\beta$, energy or Hamiltonian~$H$, and partition function $Z = \sum_x e^{-\beta H(x)}$.

The explanation of order that the Boltzmann distribution provides is powerful because the energy $H(x)$ is a ``local'' property of state $x$, in the sense that it can be determined without observing long trajectories of the system to states far from $x$. For example, the energy of a protein conformation can be estimated from its coordinates using molecular dynamics force fields \cite{noe_boltzmann_2019}. In contrast, the steady-state distribution of a nonequilibrium system is ``global'' because estimating it from a trajectory generally requires the observation of multiple returns to state $x$, which can entail visits to distant states (e.g., see \cite{lee_computing_2013}). For nonequilibrium steady states, there can be no local explanation of order analogous to the one the Boltzmann distribution provides, as Landauer explained \cite{landauer_inadequacy_1975}. Essentially, the issue with nonequilibrium steady states is that changes to the dynamics in one part of the state space can affect the relative probability of states in another \cite{landauer_motion_1988}.

Despite this barrier, a local dynamical property of states called ``rattling'' predicts the steady-state distributions of many nonequilibrium systems \cite{Chvykov2018,Chvykov2021,gold_self-organized_2021,yang_emergent_2022,kedia_drive-specific_2023}. Informally, rattling is a measure of how quickly the system exits the vicinity of a state, like a diffusion coefficient in abstract state space, but for processes that are not necessarily diffusive. (For an introduction to rattling, see the supplementary materials of \cite{Chvykov2021}.) In their study of robot swarms, Chvykov et al.\ observed that plotting the effective potential $-\log \pi (x)$ against the rattling $R(x)$ of discretized swarm configurations $x$ produced a roughly linear scatter \cite{Chvykov2021}. This observation is analogous to an approximate Boltzmann distribution because the effective potential and $H(x)$ are exactly collinear in equilibrium steady states (\cref{fig1}).

\begin{figure}
    \centering
    \includegraphics[width=1\linewidth]{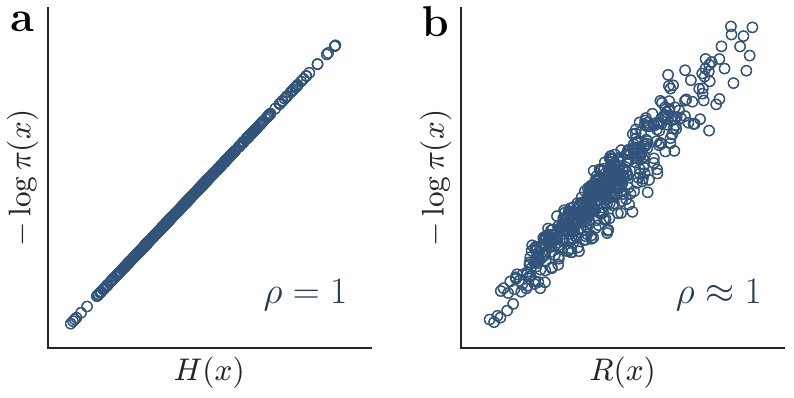}
    \caption{Local--global correlations. (\textbf{a}) The Boltzmann distribution entails perfect correlation $\rho$ between the effective potential of an equilibrium steady state and the energy of a uniformly random state. (Each mark corresponds to one state.) (\textbf{b}) For many equilibrium and nonequilibrium steady states, a different local property of a state, called rattling, is highly correlated with the effective potential.
    }
    \label{fig1}
\end{figure}

These observations motivate the study of correlations between the effective potential of a steady state and local properties of the states. In recent work \cite{Calvert2024}, we derived a general formula for the kind of correlation that Chvykov et al.\ observed, which \cref{fig1}b depicts. It applies to Markov jump processes, which are commonly used to model the dynamics underlying equilibrium and nonequilibrium steady states \cite{schnakenberg_network_1976,zia_probability_2007,Barato2015,owen_universal_2020,RevModPhys.97.015002}. The formula can be used to derive lower bounds on the typical correlation exhibited by broad classes of steady states, such as those of reaction kinetics on disordered energy landscapes \cite{calvert2025localglobalcorrelationsdynamicsdisordered}. To complement these results, which are somewhat abstract, this paper provides the first concrete calculations of the local--global correlation. Although we focus on two specific models, we expect that our approach will apply to many other models as well.

\section{Local--global correlation in Markov jump processes}\label{sec:pnas}

\subsection{Defining local--global correlation}\label{sub:defining_local_global_correlation}

We consider a Markov jump processes on a finite set~$S$ of discrete states, defined by a transition rate matrix $(W_{x,y})_{x,y \in S}$. The probability $p_y (t)$ of finding the system in state $y$ after $t$ units of time solves the master equation
\begin{equation}
    \dot{p}_y (t) = \sum_{x \in S} W_{x,y} \, p_x (t).
\end{equation}
We assume that the transition rates strongly connect the state space, which guarantees the existence and uniqueness of the steady-state or stationary distribution $\pi = \lim_{t \to \infty} p(t)$ \cite{norris_markov_1997}. Our focus is on the relationship between the effective potential $- \log \pi (x)$ and the logarithm of the exit rates, defined by 
\begin{equation}
q(x) = \sum_{y: \, y \neq x} W_{x,y}.
\end{equation}
The quantity $\log q(x)$ is the analogue of rattling $R(x)$ for Markov jump processes \cite{Calvert2024}. Accordingly, to understand the collinearity of scatter plots like \cref{fig1}b, we analyze the correlation between the effective potential and log exit rate of a uniformly random state $X \in S$:
\begin{equation}\label{eq:corr_def}
    \rho = \Corr (-\log \pi (X), \log q(X)).
\end{equation}
We assume that $\pi$ and $q$ are non-constant to ensure that~$\rho$ is well-defined. For reference, the linear correlation of random variables $U$ and $V$ with positive, finite variances is defined as the normalized covariance
\begin{equation}\label{eq:book_corr_def}
    \Corr(U,V) = \frac{\Cov (U,V)}{\sqrt{\Var (U) \Var (V)}}.
\end{equation}
The covariance equals $\Cov (U,V) = \E(UV) - \E(U) \E(V)$, in terms of the expected value $\E (U)$ of $U$ and variance $\Var (U) = \Cov (U,U)$.

\subsection{The effective potential's local and global parts}\label{subsec:explaining_rattling_correlation}

Associated with the preceding continuous-time Markov jump process is a discrete-time Markov jump process, defined by normalizing the transition rates out of each state:
\begin{equation}\label{eq:jump chain def}
    \wh W_{x,y} = W_{x,y}/q(x).
\end{equation}
This process also has a unique stationary distribution $\wh\pi$, which is related to $\pi$ and $q$ by
\begin{equation}\label{eq:three_quantites}
    \pi (x) = \frac{\wh\pi (x) / q (x)}{\sum_{y \in S} \wh\pi (y)/q(y)}.
\end{equation}
Intuitively, the limiting fraction $\pi (x)$ of time spent in state $x$ is proportional to the limiting fraction $\wh\pi (x)$ of visits made to $x$, multiplied by the typical duration of a visit to $x$, which equals $1/q (x)$. Taking the logarithm of both sides of \cref{eq:three_quantites} shows that 
\begin{equation}\label{eq:local_global_parts}
-\log \pi (X) = -\log \wh\pi (X) + \log q (X) + \mathrm{constant}.
\end{equation}
Following the discussion in \cref{sec:intro}, \cref{eq:local_global_parts} justifies the view that $-\log \wh\pi (X)$ and $\log q(X)$ are the global and local ``parts'' of the effective potential.

The authors in~\cite{Calvert2024} used \cref{eq:local_global_parts} to derive the formula 
\begin{equation}\label{eq:corr}
\rho = \frac{1+\wh\rho r}{\sqrt{1+2\wh\rho r + r^2}},
\end{equation}
in terms of
\begin{equation}\label{eq:rhohat def}
  \wh\rho = \Corr (-\log \wh\pi (X), \log q (X)),
\end{equation}
the correlation between the global and local parts, and 
\begin{equation}\label{eq:r def}
  \quad r = \sqrt{\frac{\Var (\log \wh\pi (X))}{\Var (\log q (X))}},
\end{equation}
the ratio of their standard deviations. Note that $\wh\rho \in [-1,1]$ and $r \in [0,\infty)$.

\begin{figure}
    \centering
    \includegraphics[width=0.9\linewidth]{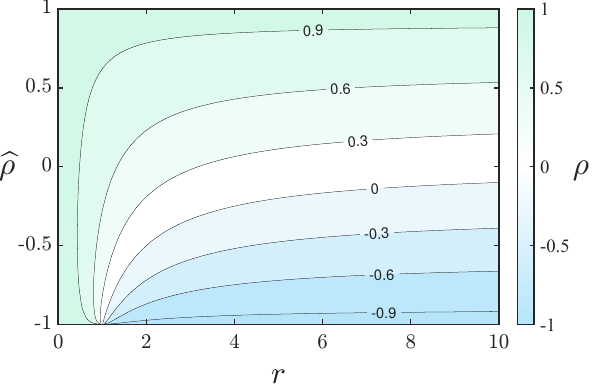}
    \caption{Contour plot of \cref{eq:corr}. Contours and shading indicate the value of $\rho$ for a given pair $(r,\wh\rho\,) \in [0,\infty) \times [-1,1]$. 
    }
    \label{fig2}
\end{figure}

Note that \cref{eq:corr} does not necessarily make $\rho$ easier to calculate---indeed, it should be as difficult to calculate $\wh\rho$ and~$r$ as it is to calculate $\rho$ from its definition \eqref{eq:corr_def}. \cref{eq:corr} does, however, imply a useful lower bound of $\rho$ in terms of~$r$ alone, which is tractable for broad classes of dynamics \cite{calvert2025localglobalcorrelationsdynamicsdisordered}. More importantly, \cref{eq:corr} explains how high correlation arises, in terms of the relationship between the parts of the effective potential.

\cref{fig2}, which depicts \cref{eq:corr}, shows that there are two ways for the correlation~$\rho$ to be close to~$1$ (the dark green region beyond the $\rho = 0.9$ contour). First, if the global part of the effective potential varies far less than the local part (i.e., $r \ll 1$), then $\rho \approx 1$, regardless of the value of $\wh\rho$. Second, if the parts of the effective potential are highly correlated (i.e., $\wh\rho \approx 1$), then $\rho \approx 1$, regardless of the value of $r$. For~$\rho$ to be negative, the parts of the effective potential must be sufficiently anti-correlated: $\wh\rho < -1/r$.

\section{Particles on a ring}\label{sec: asep}

We first demonstrate the calculation of $\rho$ for a toy model before analyzing more complicated spin-glass dynamics. The model consists of two particles moving clockwise around a discrete ring at rates depending on the gaps between the particles. We show that the local--global correlation is either close to~$1$ or close to~$-1$ depending on a parameter that biases the particles' motion. As we explain below, this model is equivalent to the Glauber dynamics of a single particle moving in a quadratic potential, at a temperature that depends on the bias parameter.

Consider two identical particles that move clockwise around a ring with $L+2$ discrete sites, where $L \geq 1$ is an integer. Each particle jumps clockwise to the next open site at a rate that depends on the number of open sites ``ahead'' of it. Specifically, if there are $y \geq 1$ open sites in the clockwise direction from one particle to the next, then the first particle jumps at a rate of $\alpha^y$, where $\alpha > 0$ is a parameter of the dynamics. If there are no open sites ahead of it, then the particle does not move. We arbitrarily choose one particle to be the ``first'' particle, and use~$x$ to denote the number of open sites to the second. Note that $x$, which takes values in $S = \{0,\dots,L\}$, completely determines the state of the system.

Based on the preceding description, the transition rate $W_{x,y}$ from $x$ to $y$ open sites satisfies
\begin{equation}\label{eq:asep_rates}
    W_{x,y} = 
    \begin{cases}
        \alpha^x & y = x-1,\\
        \alpha^{L-x} & y = x+1,\\ 
        0 & y \notin \{x-1,x+1\},
    \end{cases}
\end{equation}
for $x,y \in S$. For example, the number of open sites decreases from $x$ to $x-1$ when the first particle jumps, which happens at a rate of $W_{x,x-1} = \alpha^x$. The corresponding exit rates are
\begin{equation}\label{eq: asep q def}
    q (x) = \sum_{y: \, y \neq x} W_{x,y} = 
    \begin{cases}
        \alpha^L & x \in \{0,L\},\\ 
        \alpha^x + \alpha^{L-x} & x \notin \{0,L\}.
    \end{cases}
\end{equation}
As before, we denote the stationary distribution of the Markov chain with these transitions rates by $\pi$. Note that, although $W_{x,y}$, $q$, and $\pi$ depend on the parameter~$\alpha$, we omit it from the notation.

\cref{eq:asep_rates} equivalently describes the Glauber dynamics of a particle in a harmonic trap. Specifically, the non-zero transition rates in \cref{eq:asep_rates} can be expressed as 
\begin{equation}\label{eq: asep glauber rates}
W_{x,x\pm 1} = \nu \exp\left(- \frac{\beta}{2} (E (x\pm 1) - E(x))\right),
\end{equation}
in terms of the potential $E(x) = (x-L/2)^2$, inverse temperature $\beta = \log \alpha$, and constant $\nu = e^{\beta (L+1)/2}$. As a consequence, the stationary distribution $\pi$ is the Boltzmann distribution, which satisfies
\begin{equation}\label{eq:alpha exp}
\frac{\pi(x)}{\pi (0)} = e^{-\beta (E (x) - E(0))} = e^{-\beta (x^2 - Lx)} = \alpha^{x(L-x)}.
\end{equation}

The main result of this section is an estimate of the local--global correlation that the particle system exhibits.

\begin{theorem}\label{thm:main result asep}
For every $\alpha \neq 1$, the correlation satisfies 
\begin{equation}\label{eq:main result asep}
    \rho (\alpha) = \frac{\sqrt{15}}{4} \, \mathrm{sign}(\alpha-1) \left(1+O(L^{-1})\right),
\end{equation}
as $L \to \infty$.
\end{theorem}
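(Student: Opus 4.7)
The plan is to work directly from the definition in~\cref{eq:corr_def} after obtaining closed forms for $\pi$ and~$q$. Because the chain is birth-and-death, detailed balance between $W_{x,x+1}$ and $W_{x+1,x}$ yields
\[
\pi (x) \propto \alpha^{x(L-x)}, \qquad -\log \pi (x) = c_L - x(L-x)\log \alpha,
\]
for a normalizing constant $c_L$. Since $\rho$ is invariant under affine rescaling of its arguments, the task reduces to computing the correlation of $-X(L-X)\log\alpha$ with $\log q(X)$ for $X$ uniform on~$S$.

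Next, I would isolate the leading behavior of $\log q(X)$ by writing, for interior $x$ and $\alpha>1$,
\[
\log q(x) = \max(x,L-x)\,\log\alpha + \log\bigl(1+\alpha^{-|L-2x|}\bigr),
\]
and the same identity with $\min$ replacing $\max$ when $\alpha<1$. The remainder is positive, bounded, and geometrically small in $|L-2x|$; together with the two boundary states $x\in\{0,L\}$, it contributes an $O(L^{-1})$ correction to the correlation because the dominant terms in $\Cov(-\log\pi(X),\log q(X))$, $\Var(-\log\pi(X))$, and $\Var(\log q(X))$ are of orders $L^3$, $L^4$, and $L^2$, respectively. Setting $Z_L=|X-L/2|$ and using $X(L-X)=L^2/4-Z_L^2$, $\max(X,L-X)=L/2+Z_L$, and $\min(X,L-X)=L-\max(X,L-X)$ when $\alpha<1$, the factors of $\operatorname{sign}(\log\alpha)$ from both arguments combine to give
\[
\rho(\alpha)=\operatorname{sign}(\alpha-1)\,\Corr(Z_L,Z_L^2)+O(L^{-1}).
\]

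Finally, $Z_L/L$ converges in distribution to $Z\sim \mathrm{Unif}[0,1/2]$, and for polynomial moments the Riemann-sum error is $O(L^{-1})$ by Euler--Maclaurin. Using $\E Z=1/4$, $\E Z^2=1/12$, $\E Z^3=1/32$, and $\E Z^4=1/80$, I would obtain $\Var(Z)=1/48$, $\Var(Z^2)=1/180$, and $\Cov(Z,Z^2)=1/96$, and hence
\[
\Corr(Z,Z^2)=\frac{1/96}{\sqrt{(1/48)(1/180)}}=\frac{\sqrt{15}}{4},
\]
which combined with the preceding display yields \cref{eq:main result asep}.

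The main obstacle is the error analysis: the log-sum-exp remainder is not globally small near $x=L/2$, and the boundary values are not captured by the interior asymptotic formula when $\alpha<1$. Both difficulties dissolve upon noting that $\sum_x \alpha^{-|L-2x|}$ is a convergent geometric series for any fixed $\alpha\neq 1$, so these corrections lose a factor of $L$ against the polynomial leading terms. One could alternatively compute $\wh\rho$ and $r$ and apply \cref{eq:corr}, but the direct route above is cleaner in this one-dimensional birth-and-death setting.
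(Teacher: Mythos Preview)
Your proposal is correct and follows essentially the same strategy as the paper: approximate $\log q$ by the piecewise-linear surrogate $\max/\min\{x,L-x\}\log\alpha$, compute the correlation of the surrogate exactly via polynomial moments, and argue that the log-sum-exp remainder and the two boundary states perturb the covariance and variances only at subleading order. The paper packages the perturbation step as a standalone lemma bounding $|\Corr(U,V)-\Corr(U,\wt V)|$ in terms of $\Var(V-\wt V)$, whereas you bound the contributions to each moment directly; your substitution $Z_L=|X-L/2|$, reducing the calculation to $\Corr(Z_L,Z_L^2)$, is a tidy simplification that the paper does not use but is equivalent to its moment estimates.
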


\cref{thm:main result asep} says that $\rho (\alpha)$ is close to $\sqrt{15}/4 \approx 0.97$ when $\alpha > 1$ and close to $-\sqrt{15}/4$ when $\alpha < 1$, for all sufficiently large $L$ in terms of $\alpha$. (In \cref{eq:main result asep}, we use the standard asymptotic notation $O(L^{-1})$ to denote a quantity that is at most a constant multiple of $L^{-1}$ in absolute value for all sufficiently large $L$.) \cref{fig3}a shows a representative scatter plot of the effective potential against the log exit rates when $\alpha > 1$. For such values of $\alpha$, the correlation is already close to its limiting value when $L = 100$. For $\alpha < 1$, the correlation approaches $-\sqrt{15}/4$ more slowly (\cref{fig3}b).

\begin{figure}
    \centering
    \includegraphics[width=1\linewidth]{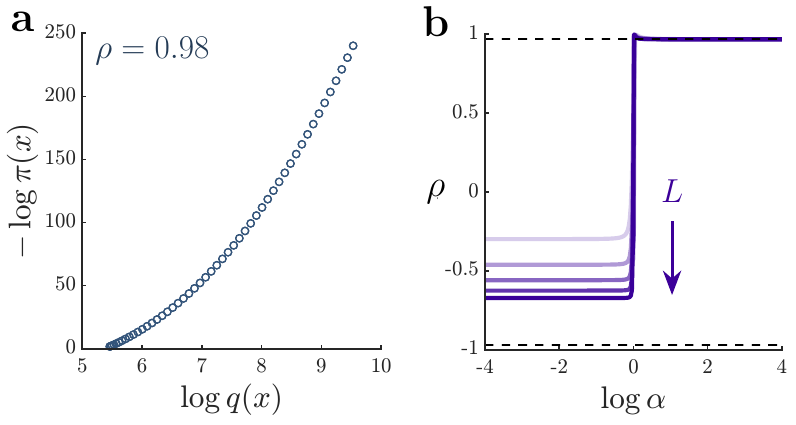}
    \caption{Local--global correlations of particles on a ring. (\textbf{a}) The effective potential scattered against the log exit rate of each state for the dynamics in \cref{eq:asep_rates} with $L = 100$ and $\alpha = 1.1$. (\textbf{b}) The correlation as a function of $L$ and $\alpha$ for $L = 25k$, $k \in \{2,\dots,6\}$. Dashed lines indicate $\rho = \pm \sqrt{15}/4$.
    }
    \label{fig3}
\end{figure}

The local--global correlation compares the exponential dependence of $\pi(x)$ and $q(x)$ on $x$. The reason why $\rho (\alpha)$ does not approach $\mathrm{sign}(\alpha-1)$ as $L \to \infty$ is because the former is quadratic, while the latter is roughly piecewise linear. Indeed, the error term in \cref{eq:main result asep} arises in part from approximating the exit rate $q(x)$ in \cref{eq: asep q def} by $\alpha^{\max\{x,L-x\}}$ when $\alpha > 1$, and $\alpha^{\min\{x,L-x\}}$ when $\alpha < 1$. The proof of \cref{thm:main result asep} in \cref{app:asep} finds that this approximation is sufficiently accurate when $L$ is large relative to the quantity $1/\sqrt{|\alpha^4 - 1|}$. Hence, if $\alpha \geq 2$ or $\alpha \leq 1/2$, for example, then $L$ only needs to be larger than some constant.

\subsection{Calculating the correlation}\label{subsec:asep_calc}

We outline the calculation of $\rho$ in this subsection, deferring some details to \cref{app:asep}.  Recalling \cref{eq:corr_def}, we need expressions for the stationary distribution and exit rates that have simple logarithms. This is true of the stationary distribution because $\pi (x)$ is proportional to $\alpha^{x(L-x)}$, so we start by simplifying the exit rates.

When $L$ and the extent of the bias $|\log \alpha\,|$ are large, the exit rates in \cref{eq: asep q def} are roughly equal to 
\begin{equation}\label{eq: tilde q asep}
\wt q(x) = 
\begin{cases}
    \alpha^{\max\{x,L-x\}} & \alpha > 1,\\ 
    \alpha^{\min\{x,L-x\}} & \alpha < 1,
\end{cases}
\end{equation}
which has a simpler logarithm than $q(x)$. \cref{fig4} compares~$q$ and~$\wt q$. In particular, it highlights the fact that $\wt q(x)$ underestimates $q(x)$ for some values of $x$ and $\alpha$, and overestimates it for others. However, aside from when $\alpha < 1$ and $x \in \{0,L\}$ (indicated by red arrows in \cref{fig4}b), $\wt q(x)$ approximates $q(x)$ increasingly well as $L$ increases.

\begin{figure}
    \centering
    \includegraphics[width=1\linewidth]{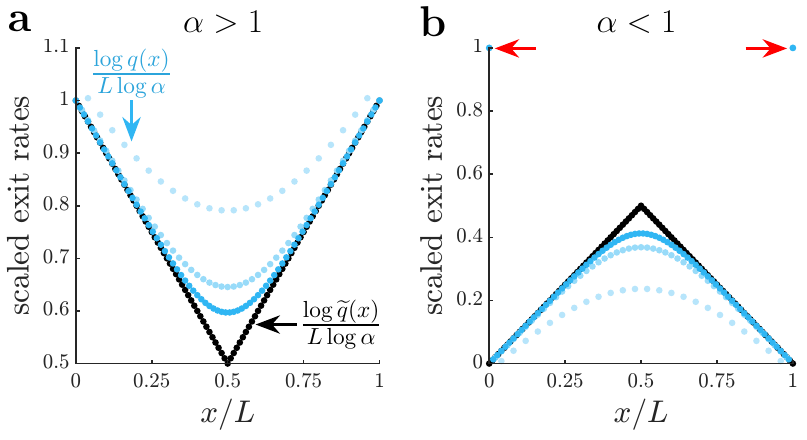}
    \caption{Comparison of $\log q (x)$ (blue points) and $\log \widetilde{q} (x)$ (black points), scaled by $1/L \log \alpha$, for (\textbf{a}) $\alpha = 1.1$ and (\textbf{b}) $\alpha = 0.9$. The blue scatters correspond to $L \in \{25,50,75\}$ (darker blue indicates higher $L$); the black scatter has $L = 100$. Note the points in (\textbf{b}) indicated by red arrows, which all blue scatters share.
    }
    \label{fig4}
\end{figure}

For the moment, we treat this approximation as exact and consider the correlation \eqref{eq:corr_def} with $\wt q$ in the place of $q$: 
\begin{equation}\label{eq: asep rho tilde1}
\wt \rho = \Corr (-\log \pi (X), \log \wt q(X)).
\end{equation}
(Recall that $X$ denotes a uniformly random state in $\{0,\dots, L\}$, as opposed to an arbitrary deterministic state~$x$.) We consider the $\alpha > 1$ case first.

According to \cref{eq:alpha exp,eq: tilde q asep}, the stationary distribution and exit rates satisfy
\begin{equation}\label{eq: pi and q}
\pi (X) = \pi (0) \, \alpha^{-U} \quad \text{and} \quad \wt q(X) = \alpha^{\wt V},
\end{equation}
in terms of the random variables $U = X(X-L)$ and $\wt V = \max\{X,L-X\}$.
Substituting these expressions into \cref{eq: asep rho tilde1}, we find that
\begin{equation}\label{eq: asep rho tilde2}
\wt \rho = \Corr (U \log \alpha - \log \pi (0), \, \wt V \log \alpha).
\end{equation}
Since the quantity ${-}\log \pi (0)$ and factors of $\log \alpha$ are constants, they do not affect the correlation, which therefore satisfies
\begin{equation}\label{eq:simple_rho_asep}
\wt \rho = \Corr (U,\wt V) = \frac{\Cov (U,\wt V)}{\sqrt{\Var (U) \Var (\wt V)}}.
\end{equation}

To calculate $\wt \rho$, we use the following estimates:
\begin{align} 
    \Cov (U,\wt V) &= \frac{1}{96} L^3 + O(L^2), \label{eq: asep est1}\\
    \Var (U) &= \frac{1}{180}L^4 + O(L^3), \label{eq: asep est2}\\
        \Var (\wt V) &= \frac{1}{48}L^2 + O(L). \label{eq: asep est3}
\end{align}
We obtain \cref{eq: asep est1,eq: asep est2,eq: asep est3} by direct calculation using the fact that $X$ is uniformly random over the $L+1$ states of $S$. For example, we calculate the covariance 
\begin{equation}
\Cov (U, \wt V) = \E (U \wt V) - \E (U) \E (\wt V)
\end{equation}
using the estimates
\begin{align}
\E (U \wt V) &= \frac{1}{L+1} \sum_{x=0}^L x(x-L) \max\{x,L-x\} \approx -\frac{11}{96} L^3, \nonumber\\
\E(U) &= \frac{1}{L+1} \sum_{x=0}^L x(x-L) \approx -\frac{1}{6} L^2,\nonumber\\
\E(\wt V) &= \frac{1}{L+1} \sum_{x=0}^L \max\{x,L-x\} \approx \frac{3}{4} L,
\end{align}
where ``$\approx$'' denotes equality to leading order in $L$. The covariance consequently satisfies
\begin{equation}
\Cov (U, \wt V) \approx -\frac{11}{96} L^3 + \frac{1}{6} L^2 \cdot \frac{3}{4} L = \frac{1}{96}
L^3.
\end{equation}
Similar calculations yield the variance estimates.

Substituting \cref{eq: asep est1,eq: asep est2,eq: asep est3} into \cref{eq:simple_rho_asep} shows that, when $\alpha > 1$, 
\begin{equation}
\wt\rho = \frac{(L^3/96) \left(1+O(L^{-1})\right)}{\sqrt{(L^4/180) (L^2/48)}} = \frac{\sqrt{15}}{4} \left(1+O(L^{-1})\right).
\end{equation}
In the $\alpha < 1$ case, following \cref{eq: tilde q asep}, we instead set $\wt V = \min\{X,L-X\}$. Using the same argument as in the $\alpha > 1$ case, we find that $\wt \rho = -(\sqrt{15}/4) (1+O(L^{-1}))$ because $\Cov(U,\wt V)$ changes sign, while $\Var (\wt V)$ is unchanged.

As we prove in \cref{app:asep}, $\wt\rho$ closely approximates $\rho$ when $L$ is sufficiently large as a function of $\alpha$. The error of this approximation, combined with the preceding estimates of $\wt\rho$, establishes \cref{thm:main result asep}.

\subsection{A threshold for the correlation}\label{subsec:key_quantities_asep}

Recall that \cref{eq:corr} explains how the correlation $\rho$ arises in terms of $\wh\rho$, the correlation between the global and local parts of the effective potential, and $r$, the ratio of their standard deviations. In particular, \cref{eq:corr} implies that~$\rho$ approximately equals $\wh \rho$ when $r \gg 1$ (i.e., the global part varies far more than the local part). As we now explain, for any $\alpha \neq 1$, the dynamics in \cref{eq:asep_rates} has $r \gg 1$ for all sufficiently large $L$. The threshold in $\rho (\alpha)$ that \cref{thm:main result asep} identifies therefore arises from an underlying threshold in the the correlation between the parts of the effective potential.

By definition, the quantities $\wh\rho$ and $r$ depend on the stationary distribution $\wh\pi$ of the discrete-time Markov jump process with transition rate matrix $\wh{W}$ \eqref{eq:jump chain def}. Specifically, they feature the quantity $\log \wh \pi (X)$, where $X$ is a uniformly random state. To calculate it, we recall \cref{eq:three_quantites}, which states that $\wh\pi (X)$ is proportional to $\pi (X) q(X)$. We then approximate $q$ by $\wt q$ and apply the expressions for $\wt q(X)$ and $\pi (X)$ from \cref{eq: pi and q} to find that
\begin{align}
    \log q(X) &\approx \wt V \log \alpha, \label{eq: logq approx}\\ 
    \log \wh \pi (X) &\approx (\wt V - U) \log \alpha + C, \label{eq: loghatpi approx}
\end{align}
for a constant $C$. As in the preceding subsection, $U$ denotes $X(X-L)$, while $\wt V$ denotes $\max\{X,L-X\}$ when $\alpha > 1$ and $\min \{X,L-X\}$ when $\alpha < 1$.

We use \cref{eq: logq approx,eq: loghatpi approx} to calculate $r$ as
\begin{equation}
r = \sqrt{\frac{\Var (\log \wh\pi (X))}{\Var (\log q(X))}} \approx \sqrt{\frac{\Var (\wt V - U)}{\Var (\wt V)}}.
\end{equation}
The quantity $\Var (\wt V - U)$ is roughly $\Var (U)$ because the variance of $U$ scales faster with $L$ than that of $\wt V$. To be precise, the estimates from \cref{eq: asep est2,eq: asep est3} imply that
\begin{align}
\sqrt{\frac{\Var (\wt V - U)}{\Var (\wt V)}} &= \sqrt{\frac{L^4/180}{L^2/48}} \left(1+O(L^{-1}) \right) \nonumber\\ 
&= \frac{2L}{\sqrt{15}} \left(1+O(L^{-1}) \right).
\end{align}
As a consequence, for any fixed $\alpha \neq 1$, $r \gg 1$ for all sufficiently large $L$. \cref{eq:corr} then implies that $\rho$ is primarily determined by the correlation $\wh\rho$ between the parts of the effective potential.

To establish that $\wh\rho$ exhibits a threshold in $\alpha$, we substitute \cref{eq: logq approx,eq: loghatpi approx} into the definition \eqref{eq:rhohat def} of $\wh\rho$, finding
\begin{align}
\wh\rho &= \Corr (-\log \wh\pi(X), \log q(X)) \nonumber\\ 
& \approx \Corr ( (U - \wt V) \log \alpha - C, \, \wt V \log \alpha).
\end{align}
The constant $C$ and factors of $\log \alpha$ do not affect the correlation, hence
\begin{equation}
    \wt\rho \approx \Corr (U - \wt V, \wt V).
\end{equation}
As the variance of $U$ grows faster with~$L$ than that of $\wt V$ does, $\Corr(U - \wt V, \wt V)$ is roughly $\Corr(U, \wt V)$ when $L$ is large. The sign of this correlation depends on whether $\wt V$ is a maximum or a minimum, hence whether $\alpha$ is above or below $1$. More precisely, the proof of \cref{thm:main result asep} establishes that
\begin{equation}
\Corr \big(U - \wt V, \wt V\big) \approx \frac{\sqrt{15}}{4} \mathrm{sign}(\alpha-1)\big(1+O(L^{-1})\big),
\end{equation}
for all $\alpha \neq 1$. We conclude that the threshold in $\rho$ arises because the local and global parts of the effective potential themselves undergo a transition in correlation as $\alpha$ varies.

\section{Spin-glass dynamics}

For our second example, we analyze a significantly more complicated family of dynamics associated with the Sherrington--Kirkpatrick (SK) spin glass. These dynamics are precisely the ``complex, nonlinear, and high-dimensional'' kind that Chvykov et al.\ suggested would be the domain of rattling \cite{Chvykov2021}. Our second main result shows that something subtler is true. Whether the non-equilibrium analogue of the Boltzmann distribution in \cref{fig1} applies depends sensitively on a parameter of the dynamics. Moreover, the threshold in correlation arises similarly to that of the particle system in \cref{sec: asep}.

The dynamics will be random, so the correlation $\rho$ will be random as well. Like \cref{thm:main result asep}, the main result of this section identifies a threshold in $\rho$, as a parameter of the dynamics varies. Although the threshold in $\rho$ is similar to that of the particle system, our subsequent analysis of $\wh\rho$ and $r$ shows that the underlying relationship between the local and global parts of the effective potential is entirely different.

For an integer $N \geq 1$ and inverse temperature $\beta > 0$, the SK model is the random probability distribution $\pi$ on the $N$-dimensional hypercube $S = \{-1,1\}^N$ that assigns to a configuration $x \in S$ the Boltzmann probability $\pi (x) \propto \exp(-\beta H(x))$, in terms of a random Hamiltonian~$H$. The Hamiltonian of $x$ is defined as 
\begin{equation}
H (x) = \sum_{i,j \in [N]} g_{ij} x_i x_j,
\end{equation}
in terms of random couplings $(g_{ij})_{i, j \in [N]}$ of the coordinates, which are independent and identically distributed (i.i.d.) normal random variables with mean $0$ and variance $1/N$.

We consider a family of Glauber dynamics for the SK model, like those commonly used in Monte Carlo simulations thereof \cite{mathieu_convergence_2000}. The dynamics are defined in terms of a parameter $\lambda \in [0,1]$ as
\begin{equation}\label{eq:sk model dynamics}
W_{x,y}^\lambda = e^{\beta\left( \lambda H(x) - (1-\lambda)H(y) \right)}, \quad y \sim x.
\end{equation}
Note that $y \sim x$ indicates that $y$ neighbors $x$ in the hypercube, meaning they differ in exactly one coordinate. This family interpolates between two extremes, in a sense. The first extreme entails transitions from $x$ to~$y$ at a rate of $W_{x,y}^{0} = e^{-\beta H(y)}$. In other words, jumps occur at a rate that depends on the energy of the destination. The second extreme entails rates that depend on the originating state: $W_{x,y}^{1} = e^{\beta H(x)}$.

We are again interested in the relationship between the exit rates
\begin{equation}
q^\lambda(x) = \sum_{y:\, y \sim x} W_{x,y}^\lambda,
\end{equation}
and the stationary distribution, which is the Boltzmann distribution $\pi$ for every $\lambda \in [0,1]$. To verify that $\pi$ is the stationary distribution, note that it satisfies detailed balance for a generic $\lambda$:
\begin{align}
\pi (x) W_{x,y}^\lambda &= e^{-\beta H(x) + \beta\left( \lambda H(x) - (1-\lambda)H(y) \right)} / Z \nonumber\\ 
&= e^{-\beta H(y) + \beta\left( \lambda H(y) - (1-\lambda)H(x) \right)} / Z \nonumber\\ 
&= \pi (y) W_{y,x}^\lambda.
\end{align}

Since the dynamics depend on the random Hamiltonian $H$, the correlation defined by
\begin{equation}
    \rho (\lambda) = \Corr (-\log \pi (X), \log q^\lambda (X)),
\end{equation}
is a random variable. Here, $X$ denotes a uniformly random state of $S$ that is independent of the random couplings. The main result of this section is an estimate of $\rho$ that holds at high temperatures, with a probability that tends to $1$ as the dimension $N$ grows.

\begin{theorem}\label{thm:sk}
Let $N \geq 3$ be an integer and set $\lambda_\ast(N) = \frac{1-4/N}{2-4/N}$. As $\beta \to 0$, the correlation satisfies 
\begin{equation}\label{eq:sk}
\rho (\lambda) = \mathrm{sign}(\lambda - \lambda_\ast) \left(1 + \frac{\sqrt{N}}{|\lambda-\lambda_\ast|} \cdot O(\beta) \right)
\end{equation}
for every $\lambda \neq \lambda_\ast$, with a probability of at least $1 - O(N^{-2})$ as $N \to \infty$.
\end{theorem}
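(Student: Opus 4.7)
My plan is to expand $\log q^\lambda(X)$ in powers of $\beta$. At $\beta = 0$ the exit rates collapse to the constant $N$ and $\rho$ is degenerate, so the relevant object is the cumulant expansion
\[
\log q^\lambda(x) = \log N + \beta\,\overline{A}(x) + \beta^2 R(x) + O(\beta^3),
\]
where $\overline{A}(x) = \tfrac{1}{N}\sum_{y \sim x} A(x,y) = \lambda H(x) - (1-\lambda) M(x)$ with $M(x) = \tfrac{1}{N}\sum_{y\sim x} H(y)$, and $R(x) = \tfrac{1}{2}\Var_{y\sim x} A(x,y)$. Since $-\log \pi(X) = \beta H(X) + \log Z$, both arguments of the correlation acquire a common factor of $\beta$ that cancels, leaving $\rho(\lambda) = \Corr_X(H,\,\overline{A} + \beta R + O(\beta^2))$.

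The decisive algebraic step is a hypercube identity. A direct computation gives $H(y^{(k)}) - H(x) = -2 x_k \sum_{j \neq k}(g_{kj} + g_{jk}) x_j$ for each coordinate-flip neighbor $y^{(k)}$, and summing over $k$ yields $\sum_k H(y^{(k)}) = N H(x) - 4 \wt H(x)$, where $\wt H(x) = \sum_{i \neq j} g_{ij} x_i x_j$ is the off-diagonal part of $H$. Hence $M(X) = (1 - 4/N)\wt H(X) + \sum_i g_{ii}$, and
\[
\overline{A}(X) = c(\lambda)\,\wt H(X) + \mathrm{const}, \qquad c(\lambda) = (2 - 4/N)(\lambda - \lambda_\ast),
\]
so $c(\lambda)$ is a linear function of $\lambda$ that vanishes precisely at $\lambda_\ast$. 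Because $H$ and $\wt H$ differ only by a state-independent constant, the leading-order correlation reduces to that of $\wt H$ with a scalar multiple of itself, giving $\mathrm{sign}(c(\lambda)) = \mathrm{sign}(\lambda - \lambda_\ast)$ (using $2 - 4/N > 0$ for $N \geq 3$).

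To quantify the error I would expand $\Corr_X(\wt H,\,c\wt H + \beta R + O(\beta^2))$ via the elementary identity $\rho = \mathrm{sign}(c)(1 + u\gamma)/\sqrt{1 + 2u\gamma + u^2}$, with $u = \beta\,\sigma_R/(|c|\sigma_H)$, $\gamma = \Corr_X(\wt H, R)$, $\sigma_H^2 = \Var_X \wt H(X)$, and $\sigma_R^2 = \Var_X R(X)$, obtaining $\rho(\lambda) = \mathrm{sign}(\lambda - \lambda_\ast)(1 + O(u))$. To match the theorem's error $\sqrt{N}\beta/|\lambda - \lambda_\ast|$ it suffices that $\sigma_R/\sigma_H = O(\sqrt{N})$ with high probability. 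The main technical obstacle is therefore the concentration of $\sigma_H^2$, $\sigma_R^2$, and $\Cov_X(\wt H, R)$ with probability at least $1 - O(N^{-2})$ over the random couplings: these are quadratic and quartic forms in the Gaussian $g_{ij}$, amenable to the Hanson--Wright inequality or direct moment computations, and since Gaussian chaos gives exponential tails the polynomial bound $1 - O(N^{-2})$ should be comfortably met. A secondary concern is tracking the $O(\beta^3)$ remainder of the cumulant expansion and verifying that it remains subdominant to $\beta R$ uniformly in $\lambda$ bounded away from $\lambda_\ast$.
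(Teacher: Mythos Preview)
Your approach is the paper's: both linearize $\log q^\lambda(x)$ at small $\beta$, use the neighbor-sum identity $\sum_k\bigl(H(x^{(k)})-H(x)\bigr)=-4\wt H(x)$ to collapse the linear term to $c(\lambda)\,\wt H(X)+\text{const}$ with $c(\lambda)=(2-4/N)(\lambda-\lambda_\ast)$, read off $\wt\rho=\mathrm{sign}(\lambda-\lambda_\ast)$, and finish with a correlation-perturbation estimate.

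The only substantive difference is in the error step. The paper does not isolate your quadratic term $R$ or invoke Hanson--Wright; instead it takes a plain union bound over all $N2^N$ pairs $(x,k)$, together with the Gaussian tail of $D_k(x)\sim\mathcal N\bigl(0,4(1-1/N)\bigr)$, to get $\max_{x,k}|D_k(x)|\le 4\sqrt N$ with probability at least $1-N^{-2}$. On that event the \emph{entire} remainder beyond the linear term is uniformly $N\cdot O(\beta)$, so its $X$-variance is trivially $O(N^2\beta^2)$; combined with $\Var_X H(X)\ge (N-1)/2$ (Chernoff for its explicit $\mathrm{Gamma}\bigl(\tfrac{N(N-1)}{4},\tfrac{4}{N}\bigr)$ law), a short perturbation lemma delivers the $\sqrt N\,\beta/|\lambda-\lambda_\ast|$ error. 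Note that what you label a ``secondary concern''---justifying the $O(\beta^3)$ remainder uniformly in $x$---is in fact the main technical point, since the implicit constant in your expansion depends on $\max_k|D_k(x)|$ and is random; once you secure the uniform bound on the $D_k$'s, the crude sup-norm estimate on the remainder makes a separate Hanson--Wright analysis of $\sigma_R$ unnecessary.
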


Informally, \cref{thm:sk} means that, in high enough dimensions and at high enough temperatures, the correlation is approximately $\mathrm{sign}(\lambda - \lambda_\ast)$ for every $\lambda \neq \lambda_\ast$, with a probability close to $1$ over the randomness of the couplings. According to \cref{eq:sk}, the correlation abruptly increases from $-1$ to $1$ as $\lambda$ increases past $\lambda_\ast$ (\cref{fig5}). Note that $\lambda_\ast$ lies in $[0,1]$ only if $N \geq 5$. In particular, \cref{eq:sk} implies that the correlation is nonnegative regardless of the value of $\lambda$ when the dimension $N$ is at most $4$ (\cref{fig5}c). As in \cref{thm:main result asep}, our estimate of the correlation features an error term that arises from an approximation of the exit rates by a function that is easier to analyze.

In the low-temperature regime, it is more difficult to obtain a useful approximation of the exit rates. However, by combining the methods of this paper with those of \cite{calvert2025localglobalcorrelationsdynamicsdisordered}, we expect that it is possible to obtain a lower bound of~$\rho$ that tends to $1$ as $\lambda$ does, which holds with high probability as $N \to \infty$. In particular, we anticipate that, in both the high- and low-temperature regimes, $\rho$ is typically close to~$1$ when~$\lambda$ is.

\begin{figure}
    \centering
    \includegraphics[width=1\linewidth]{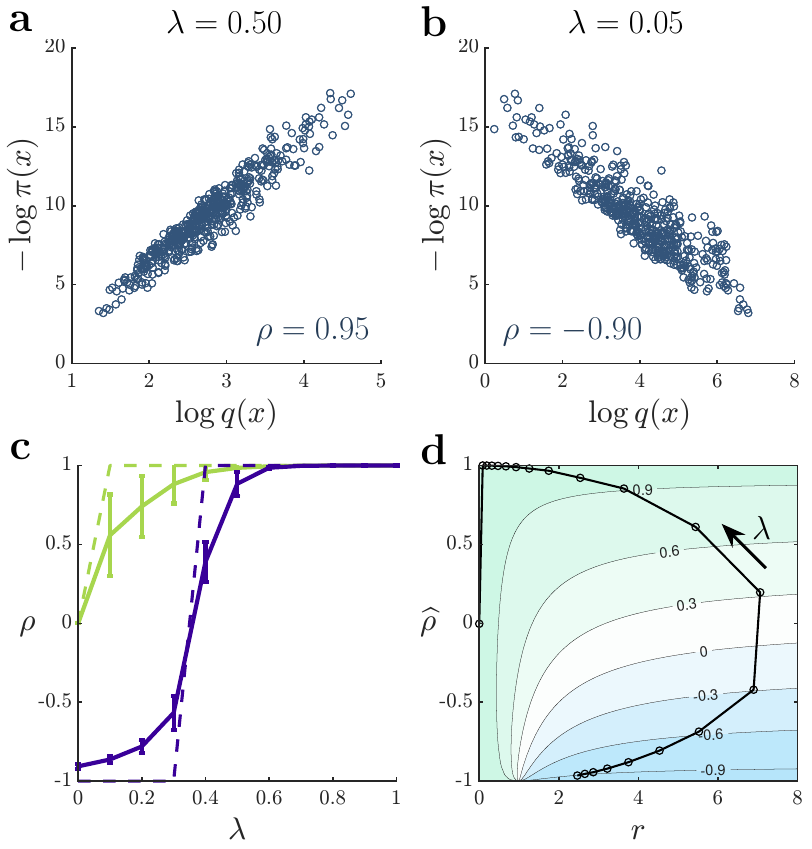}
    \caption{Local--global correlations of SK model dynamics. The effective potential scattered against the log exit rate of each state for (\textbf{a}) $\lambda = 0.50$ and (\textbf{b}) $\lambda = 0.05$. (\textbf{c}) The correlation as a function of $\lambda$ for $N = 4$ (green curve) and $N=10$ (blue curve). The curves show the mean correlation over $50$ and $10$ independent trials, respectively, with corresponding error bars of $\pm 1$ standard deviation. Dashed lines linearly interpolate the formula in \cref{eq:sk}. (\textbf{d}) Plot of $(r,\wh{\rho})$ pairs for $\lambda \in [0,1]$ with $N=10$ fixed. Each mark represents the average over $10$ independent trials of $\wh\rho$ and $r$ for a particular~$\lambda$. Shading and contours indicate the corresponding value of~$\rho$. The arrow indicates the direction of increasing $\lambda$. In all cases, $\beta = 1$.
    }
    \label{fig5}
\end{figure}

\subsection{Calculating the correlation at high temperature}\label{subsec:sk_calc}

For the moment, assume that the random Hamiltonian~$H$ is given, in which case the only randomness is in the uniformly random state $X$. We begin by expressing the exit rates as
\begin{equation}
q^\lambda(x) = \sum_{k \in [N]} e^{\beta\left( \lambda H(x) - (1-\lambda)H(x^{(k)}) \right)},
\end{equation}
in terms of $x^{(k)}$, the element of $S$ that agrees with $x$ except in the $k$th coordinate:
\begin{equation}
x^{(k)} = (x_1, \dots, x_{k-1}, -x_k, x_{k+1}, \dots, x_N).
\end{equation}
We then separate a factor of $e^{\beta (2\lambda - 1) H(x)}$ from the rates, which results in 
\begin{equation}\label{eq:qlambda}
q^\lambda(x) = e^{\beta (2\lambda - 1) H(x) + \beta A(x)},
\end{equation}
in terms of the quantity
\begin{equation}
A(x) = \frac{1}{\beta} \log \sum_{k \in [N]} e^{-\beta (1-\lambda) (H(x^{(k)}) - H(x))}.
\end{equation}
By substituting \cref{eq:qlambda} and the Boltzmann distribution $\pi (x) \propto e^{-\beta H(x)}$ into \cref{eq:corr_def}, we find that the correlation satisfies
\begin{equation}\label{eq:corr_in_sk}
    \rho = \Corr \big( H(X), (2\lambda-1) H(X) + A(X)\big).
\end{equation}
The correlation is therefore determined by the variances of $H(X)$ and $A(X)$, as well as their covariance.

While the variance of $H(X)$ can be calculated exactly, the quantities involving $A$ are more complicated. We will resort to estimates of these quantities in the high-temperature setting, based on the following idea. A short calculation shows that, since the couplings are i.i.d.\ normal random variables with mean $0$ and variance $1/N$, the $N2^N$ random energy differences of the form $H(x^{(k)}) - H(x)$ that define $A$ will rarely exceed $\sqrt{N}$ in absolute value. Hence, if $\beta$ is sufficiently small in terms of $N$, then $A(x)$ will be close to 
\begin{equation}
\wt A (x) = \frac{1}{\beta} \log N - \frac{1-\lambda}{N} \sum_{k \in [N]} (H(x^{(k)}) - H(x)),
\end{equation}
uniformly in $x$. We will show that the closeness of $A$ and~$\wt A$ implies that $\rho$ is close to
\begin{equation}
\wt \rho = \Corr \left( H(X), (2\lambda - 1) H(X) + \wt A (X) \right).
\end{equation}
Accordingly, instead of calculating $\rho$ directly, we will calculate $\wt\rho$ and then bound the error incurred by doing so.

We implement this strategy in \cref{app:sk}, where we further show that the variance of $\wt A$ satisfies
\begin{equation}\label{eq: sk a var}
\Var (\wt A(X)) = (4(1-\lambda)/N)^2 \Var (H(X)),
\end{equation}
while its covariance with $H$ satisfies
\begin{equation}\label{eq: sk cov ah}
\Cov (H(X),\wt A(X)) = (4(1-\lambda)/N) \Var (H(X)).
\end{equation}
Using these estimates, we calculate $\wt\rho$ when $\lambda \neq \lambda_\ast$ to be 
\begin{widetext}
    \begin{align}
    \wt\rho (\lambda) &= \frac{(2\lambda-1) \Var H(X) + \Cov (H(X),A(X))}{\sqrt{\Var H(X) ( (2\lambda-1)^2 \Var H(X) + \Var A(X) + 2 (2\lambda-1) \Cov (H(X),A(X)))}} \nonumber\\ 
    &= \frac{(2\lambda-1) + 4(1-\lambda)/N}{\sqrt{(2\lambda-1)^2 + 8(2\lambda-1)(1-\lambda)/N + (4(1-\lambda)/N)^2}} = \mathrm{sign} \big( (2\lambda-1) + 4(1-\lambda)/N \big) = \mathrm{sign} (\lambda - \lambda_\ast).\label{eq:sk rho tilde}
    \end{align}
\end{widetext}
This explains the formula in \cref{thm:sk}, aside from the error term, the control of which addresses the randomness of the couplings. We provide a full proof in \cref{app:sk}.

\subsection{Interpreting the correlation threshold}\label{subsec:key_quantities_sk}

As in \cref{subsec:key_quantities_asep}, we turn to the relationship between the parts of the effective potential. By \cref{eq:three_quantites}, $\wh\pi^{\lambda} (x)$ is proportional to $\pi (x) q^\lambda(x)$, hence
\begin{equation}
\wh\pi^\lambda (x) \propto e^{-\beta(2 (1-\lambda) H(x) - A(x))}. 
\end{equation}
We therefore find that
\begin{multline}
\wh\rho = \Corr \big(2(1-\lambda) H(X) - A(X),\\ (2\lambda-1) H(X) + A(X)\big),
\end{multline}
while the parameter $r$ satisfies
\begin{equation}
r = \sqrt{\frac{\Var (2(1-\lambda) H(X) - A(X))}{\Var( (2\lambda - 1) H(X) + A(X))}}.
\end{equation}

We could calculate $\wh\rho$ and $r$ directly, like we did in \cref{subsec:key_quantities_asep}. Instead, we note that \cref{eq:three_quantites} implies an analogue of \cref{eq:corr} for $\wh\rho$:
\begin{equation}
\wh\rho = - \frac{1-\rho s}{\sqrt{1-2\rho s + s^2}},
\end{equation}
where the corresponding analogue of $r$ is
\begin{equation}
s = \sqrt{\frac{\Var (\log \pi (X))}{\Var (\log q(X))}}.
\end{equation}
Note that $\wh\rho \approx \rho$ when $s \gg 1$. Approximating $A$ by~$\wt A$ and then using \cref{eq: sk a var,eq: sk cov ah}, we find that 
\begin{align}
s &= \sqrt{\frac{\Var H(X)}{\Var \big( (2\lambda-1) H(X) + A(X)\big)}} \nonumber\\ 
&\approx \frac{1}{\big|(2\lambda-1) + 4(1-\lambda)/N \big|} = \frac{1}{|\lambda - \lambda_\ast|}.
\end{align}
Heuristically, when $\lambda$ is sufficiently close to the threshold at $\lambda_\ast$, $s$ is much larger than $1$, in which case $\wh\rho$ is close to $\rho$. As in \cref{thm:main result asep}, we conclude that the transition from negative to positive $\rho$ is mediated primarily through a similar change in $\wh\rho$ (\cref{fig5}d).

\section{Discussion}\label{sec:disc}

Steady states that exhibit high local--global correlation satisfy an approximate analogue of the Boltzmann distribution (\cref{fig1}). In this case, a simple local property of a state---the logarithm of the exit rate---plays the role of energy. The virtue of this approximation is that exit rates are often easy to estimate. Indeed, the exit rate of a state is the reciprocal of the average holding time in it. To estimate the exit rate, it therefore suffices to repeatedly initialize the system in the state and average the times it takes to leave. This means that it is unnecessary to observe long trajectories of the system to faraway states, which is generally required of estimating the steady-state distribution \cite{lee_computing_2013}.

    While a major goal of local--global correlation is to explain order in broad classes of nonequilibrium steady states, the concept applies equally well to equilibrium steady states. For this reason, and to facilitate our calculations of $\rho$, we analyzed reversible dynamics with stationary distributions that are explicitly known. For some equilibrium steady states, the Hamiltonian and the logarithm of exit rates coincide. For example, when $\lambda = 1$, the exit rates of the SK model dynamics \eqref{eq:sk model dynamics} satisfy $q (x) = e^{\beta H(x)}$ and therefore $\rho = 1$. More generally, high local--global correlation can be valuable for understanding equilibrium steady states because exit rates may be readily estimable even when the Hamiltonian is complicated or unknown.

    Clearly, many steady states cannot exhibit high local--global correlation. It is therefore important to understand precisely why and how typically $\rho$ is close to $1$. The general formula for $\rho$ in \cref{eq:corr} somewhat abstractly explains that there are two ``ways'' for $\rho$ to be close to~$1$ (\cref{subsec:explaining_rattling_correlation}). These ways are defined in terms of key parameters $\wh\rho$ and $r$, which together characterize the relationship between the local and global parts of the effective potential (\cref{fig2}). Then, to understand how typically~$\rho$ is close to~$1$, one could estimate the typical values of $\wh\rho$ and $r$ under a probability distribution on a class of steady states. However, these quantities are generally difficult to calculate, and our recent work instead resorts to bounding their expected values  \cite{calvert2025localglobalcorrelationsdynamicsdisordered}. 

    In this context, \cref{thm:main result asep,thm:sk} serve two purposes. First, they provide explicit estimates of the local--global correlation in two families of dynamics, one deterministic and one random. The proofs of these estimates demonstrate an approach to calculating $\rho$ that we expect will apply to many other dynamics. Second, they show that the property of a steady state exhibiting high local--global correlation can depend sensitively on parameters of the dynamics (\cref{fig3}b and \cref{fig5}c). In both cases, the thresholds in $\rho$ arise from underlying thresholds in the correlation $\wh\rho$ between the parts of the effective potential. Such thresholds can also arise when $r$, which measures the relative variance of the parts, crosses~$1$ (e.g., \cite[Fig.\ 6]{Calvert2024}).

    An important goal for future work is to develop methods for estimating or bounding $\rho$ in situations where the stationary distribution $\pi$ is not explicit, as is generally true of dynamics that violate detailed balance. One example of recent work in this direction approximates the stationary distributions of nonreversible Markov chains with random transition rates by the probability distribution that is inversely proportional to the exit rates \cite{calvert2025noteasymptoticuniformitymarkov}. A possible strategy is to use the Markov chain tree theorem \cite{leighton_estimating_1986} to express and manipulate $\pi$, which has been integral to several recent results in stochastic thermodynamics \cite{owen_universal_2020,DalCengio2023,Floyd2025}. 
    
\begin{acknowledgments}
The authors thank two anonymous reviewers and Sid Redner for valuable input. This work was supported by U.S.\ Army Research Office award MURI W911NF-19-1-0233 and by the National Science Foundation award CCF-2106687.
\end{acknowledgments}

\section*{Data Availability}

The data that support the findings of this article are openly available \cite{calvert_2025_17791154}.

\appendix

\section{Approximation of correlation}\label{app:corr_perturb_est}

The correlations $\rho$ in \cref{thm:main result asep,thm:sk} are difficult to calculate exactly, due to the form of each model's exit rates. Instead, we calculate the correlations $\wt\rho$ resulting from a close approximation of the exit rates, which is simpler to analyze. The next lemma states the intuitive fact that, so long as this approximation is sufficiently accurate, $\wt \rho$ and $\rho$ are close. In the statement and its proof, we write $1 \pm a$ to denote the interval $[1-a,1+a]$ and $b = 1\pm a$ to mean that $b$ belongs to this interval. 

\begin{lemma}\label{lem:corr_perturb}
Let $U$, $V$, and $\wt V$ be random variables with finite, positive variances, that satisfy
\begin{equation}
| \Cov (U, \wt V) | \geq c \sqrt{\Var (U)} \quad \text{and} \quad \Var (\wt V) \geq c^2,
\end{equation}
for some $c > 0$. If $\Var (V - \wt V) < c^2/4$, then $\Corr (U,V)$ equals
\begin{equation}
    \Corr (U, \wt V) \left( 1 \pm \frac{5}{c} \sqrt{\Var (V - \wt V)} \right). 
\end{equation}
\end{lemma}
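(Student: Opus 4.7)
The plan is to rewrite
\[
\frac{\Corr(U,V)}{\Corr(U, \wt V)} = \frac{\Cov(U,V)}{\Cov(U, \wt V)} \cdot \sqrt{\frac{\Var(\wt V)}{\Var(V)}}
\]
as a product of two ratios and bound each using Cauchy--Schwarz together with the lemma's two hypotheses. Let $\eta = V - \wt V$ and $\sigma = \sqrt{\Var(V - \wt V)}$, so that $\sigma < c/2$ and, in particular, $\sigma/\sqrt{\Var(\wt V)} \leq \sigma/c < 1/2$ since $\Var(\wt V) \geq c^2$.

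For the covariance ratio, bilinearity gives $\Cov(U,V) = \Cov(U, \wt V) + \Cov(U, \eta)$, and Cauchy--Schwarz bounds $|\Cov(U, \eta)| \leq \sigma \sqrt{\Var(U)}$. The hypothesis $|\Cov(U, \wt V)| \geq c \sqrt{\Var(U)}$ then yields $\Cov(U, V)/\Cov(U, \wt V) = 1 \pm \sigma/c$. For the variance ratio, interpreting $\sqrt{\Var(\cdot)}$ as an $L^2$-norm of a centered variable and applying the triangle inequality to $V = \wt V + \eta$ yields $\sqrt{\Var(V)}/\sqrt{\Var(\wt V)} = 1 + t$ with $|t| \leq \sigma/\sqrt{\Var(\wt V)} \leq \sigma/c \leq 1/2$. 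A one-line estimate then shows that $1/(1+t) \in 1 \pm 2|t|$ on this range, so $\sqrt{\Var(\wt V)/\Var(V)} = 1 \pm 2\sigma/c$.

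Multiplying the two bounds gives $(1 \pm \sigma/c)(1 \pm 2\sigma/c) = 1 \pm (3\sigma/c + 2\sigma^2/c^2)$, and using $\sigma/c \leq 1/2$ to absorb the quadratic term into the linear one puts this inside $1 \pm 5\sigma/c$, which is the conclusion. The only subtle step is the asymmetric inflation of $1/(1+t)$ when $t$ is negative: without the quantitative hypothesis $\Var(V - \wt V) < c^2/4$, the denominator $\sqrt{\Var(V)}$ could be driven arbitrarily close to zero and the variance ratio would blow up. Aside from this, the proof is a direct bookkeeping exercise in Cauchy--Schwarz, so I expect no real obstacle beyond careful tracking of constants.
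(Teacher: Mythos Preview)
Your proposal is correct and follows essentially the same route as the paper: split $\Corr(U,V)/\Corr(U,\wt V)$ into a covariance ratio and a standard-deviation ratio, control each via Cauchy--Schwarz and the hypotheses, then multiply and absorb the quadratic cross term using $\sigma/c<1/2$. The only cosmetic difference is that the paper expands $\Var(V)=\Var(\wt V)(1+a^2\pm 2a)$ with $a=\sqrt{\Var(\eps)/\Var(\wt V)}$ and handles the resulting algebraic ratio directly, whereas you invoke the triangle inequality on centered $L^2$ norms to get $\sqrt{\Var(V)}/\sqrt{\Var(\wt V)}=1+t$ with $|t|\le\sigma/c$; these are equivalent moves.
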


\begin{proof}
Let $\eps = V - \wt V$ and $a = (\Var (\eps)/\Var(\wt V))^{1/2}$. The Cauchy--Schwarz inequality implies that 
\begin{align}
\Cov (U,V) &= \Cov (U,\wt V) + \Cov (U,\eps) \nonumber\\ 
&= \Cov (U, \wt V) (1 \pm a).
\end{align}
The variance of $V$ analogously equals
\begin{align}
\Var (V) &= \Var(\wt V) + \Var(\eps) + 2 \Cov(\wt V, \eps) \nonumber\\ 
& = \Var(\wt V) \left( 1 + a^2 \pm 2 a \right).
\end{align}
Substituting these expressions into \cref{eq:book_corr_def} shows that
\begin{equation}
\Corr (U,V) = \Corr (U,\wt V) \left( \frac{1 \pm a}{1 + a^2 \pm 2a} \right).
\end{equation}
The assumed bounds on $\Var(\wt V)$ and $\Var (\eps)$ imply that $a < 1/4$. As a consequence, the parenthetical expression is always positive, and some algebra further shows that the interval it represents is contained in $[1 - 3a, 1 + 5a]$. This completes the proof, since $a \leq \sqrt{\Var(\eps)}/c$ by the assumed lower bound of $\Var(\wt V)$.
\end{proof}

\section{Proof of \texorpdfstring{\cref{thm:main result asep}}{Theorem 1}}\label{app:asep}

The proof outline in \cref{subsec:asep_calc} was incomplete because, instead of calculating $\rho$, we calculated
\begin{equation}
\wt \rho = \Corr (-\log \pi (X), \log \wt q(X)),
\end{equation}
where
\begin{equation}
\wt q(x) = 
\begin{cases}
    \alpha^{\max\{x,L-x\}} & \alpha > 1,\\ 
    \alpha^{\min\{x,L-x\}} & \alpha < 1.
\end{cases}
\end{equation}
We now use \cref{lem:corr_perturb} to obtain the estimate of $\rho$ in \cref{thm:main result asep} from our estimate of $\wt\rho$.

\begin{proof}[Proof of \cref{thm:main result asep}]
Recall that $\rho = \Corr (U,V)$, where 
\begin{equation}\label{eq:uv_def}
U = \frac{-\log \pi (X)}{\log \alpha} \quad \text{and} \quad V = \frac{\log q(X)}{\log \alpha}.
\end{equation}
We analogously define $\wt\rho = \Corr (U,\wt V)$, in terms of 
\begin{equation}
\wt V = \frac{\log \wt q (X)}{\log \alpha}, 
\end{equation}
and set $\eps = \wt V - V$. 

\cref{eq:simple_rho_asep} and the covariance estimates that immediately follow it show that
\begin{equation}
\wt \rho = \Corr(U,\wt V) = \frac{\sqrt{15}}{4} \, \mathrm{sign}(\alpha-1) \left(1+O(L^{-1})\right).
\end{equation}
The same estimates show that $U$ and $\wt V$ satisfy the hypotheses of \cref{lem:corr_perturb} with $c = \delta L$, for a small positive number $\delta$. To apply the conclusion of \cref{lem:corr_perturb}, it therefore suffices to show that $\eps$ has variance of less than $\delta^2 L^2/4$. In fact, $\Var (\eps)$ is much smaller as a function of~$L$. We consider the cases of $\alpha > 1$ and $\alpha < 1$ separately. 

First, let $\alpha > 1$. By definition, $\eps (x) = 0$ for $x \in \{0,L\}$, while 
\begin{equation}
\eps (x) = \log \left(1+\alpha^{\min\{x,L-x\}-\max\{x,L-x\}}\right)
\end{equation}
for $x \notin \{0,L\}$. Since $\log (1+x) \leq x$ for all $x > -1$, the second moment of $\eps$ satisfies 
\begin{equation}
\E(\eps^2) \leq \frac{1}{L+1} \sum_{x=1}^{L-1} \alpha^{2 (\min\{x,L-x\}-\max\{x,L-x\})}.
\end{equation}
Simplifying the geometric sum, and using the fact that $\Var(\eps) \leq \E(\eps^2)$, we find that
\begin{equation}
\Var(\eps) \leq \frac{2\alpha^4}{L(\alpha^4 - 1)}.
\end{equation}
This bound shows that, as long as $L$ is sufficiently large as a function of $\alpha$, then $\Var (\eps) < \delta^2 L^2/4$. \cref{lem:corr_perturb} then implies that 
\begin{equation}
\rho = \wt \rho \left(1 \pm \frac{5 \sqrt{2}}{\delta L^{3/2}} \sqrt{\frac{\alpha^4}{\alpha^4 - 1}} \right).
\end{equation}

When $\alpha < 1$, the only difference is that the minimum and maximum that appear in the preceding expression for $\eps$ are swapped. Analogous reasoning then shows that $\Var(\eps) \leq 2/(L(1-\alpha^4))$. Hence, $\Var (\eps) < \delta^2L^2/4$ for all sufficiently large $L$ as a function of $\alpha$, and so \cref{lem:corr_perturb} implies that the correlations are related by
\begin{equation}
\rho = \wt\rho \left(1 \pm \frac{5\sqrt{2}}{\delta L^{3/2}} \frac{1}{\sqrt{1 - \alpha^4}} \right).
\end{equation}

In both the $\alpha > 1$ and $\alpha < 1$ cases, by taking $L$ larger in terms of $\alpha$ if necessary, we can ensure that
\begin{equation}
\rho = \wt\rho \left(1 + O(L^{-1}) \right).
\end{equation}
Together with the estimate of $\wt\rho$, this proves that
\begin{equation}
\rho = \frac{\sqrt{15}}{4} \, \mathrm{sign}(\alpha-1) \left(1+O(L^{-1})\right).
\end{equation}
\end{proof}

\section{Proof of \texorpdfstring{\cref{thm:sk}}{Theorem 2}}\label{app:sk}

To complete the proof of \cref{thm:sk} outlined in \cref{subsec:sk_calc}, we need to bound the difference of $\rho$ and $\wt\rho$, and justify the expressions for the variance of $\wt A(X)$ and its covariance with $H(X)$, which we used to calculate $\wt\rho$. For the former task, we aim to apply \cref{lem:corr_perturb} with $\eps = A(X) - \wt A(X)$, which requires a bound on $\Var (\eps)$. We emphasize that $\Var(\eps)$ is a random variable, because the variance is taken with respect to the uniformly random state $X$, not the randomness of the couplings $(g_{ij})_{i,j \in [N]}$ that define the energy.

We will bound above $\Var(\eps)$ on the event that energy differences of the form $D_k (x) = H(x^{(k)}) - H(x)$ are at most $O(\sqrt{N})$ in absolute value, across all states $x \in S$ and coordinates $k \in [N]$. Recall that 
\begin{align}
A(x) &= \frac{1}{\beta} \log \sum_{k \in [N]} e^{-\beta (1-\lambda) D_k (x)},\\ 
\wt A (x) &= \frac{1}{\beta} \log N - \frac{1-\lambda}{N} \sum_{k \in [N]} D_k (x).
\end{align}
We will assume that $\beta$ is small enough as a function of~$N$ so that, when this even occurs, the summands that appear in $A(x)$ approximately satisfy
\begin{equation}\label{eq:summand_bound}
e^{-\beta (1-\lambda) D_k (x)} \approx 1 - \beta(1-\lambda) D_k (x),
\end{equation}
uniformly for $x \in S$. Approximating the summands of $A(x)$ in this way results in $\wt A(x)$, hence $\eps$ is the error associated with \cref{eq:summand_bound}.

\subsection{Control of energy differences}

The next lemma uses the fact that the $D_k (x)$ are centered normal random variables to prove that all energy differences are $O(\sqrt{N})$ in absolute value, with probability $1$ as $N \to \infty$. Throughout this section, as in the statement of \cref{thm:sk}, we assume that $N \geq 2$ is an integer.

\begin{lemma}\label{lem:e occurs}
    The largest absolute energy difference satisfies
    \begin{equation}
    \max_{x \in S, \, k \in [N]} \left| D_k (x) \right| \leq 4\sqrt{N},
    \end{equation}
    with a probability of at least $1 - N^{-2}$. 
\end{lemma}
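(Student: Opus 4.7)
The plan is to reduce the claim to a Gaussian tail bound followed by a union bound over the $N \cdot 2^N$ pairs $(x,k)$, after showing that each $D_k(x)$ is a centered Gaussian with variance at most $8$.

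First, I would expand $D_k(x) = H(x^{(k)}) - H(x)$ term by term. In the double sum defining $H$, any pair $(i,j)$ with neither index equal to $k$ contributes the same amount to $H(x^{(k)})$ and $H(x)$, while the diagonal term $g_{kk} x_k^2$ is unchanged because $x_k^2 = 1$. The only surviving contributions come from pairs with exactly one index equal to $k$, each of which flips sign, giving
\[
D_k(x) = -2 x_k \Big( \sum_{j \neq k} g_{kj} x_j + \sum_{i \neq k} g_{ik} x_i \Big).
\]
Since this is a linear combination of $2(N-1)$ independent $N(0,1/N)$ random variables, each with coefficient of absolute value $2$, the variable $D_k(x)$ is centered Gaussian with variance
\[
\Var(D_k(x)) = 4 \cdot 2(N-1) \cdot (1/N) = 8(N-1)/N \leq 8.
\]

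Next, I would apply the standard Gaussian tail bound $\Prob(|Z| \geq t) \leq 2 \exp(-t^2/(2\sigma^2))$ with $t = 4\sqrt{N}$ and $\sigma^2 = 8(N-1)/N$. This yields, for each fixed $(x,k)$,
\[
\Prob\bigl(|D_k(x)| \geq 4\sqrt{N}\bigr) \leq 2 \exp\!\bigl(- N^2/(N-1)\bigr) \leq 2 e^{-N-1},
\]
using $N^2/(N-1) \geq N+1$ for $N \geq 2$. A union bound over the $N \cdot 2^N$ choices of $(x,k) \in S \times [N]$ then gives
\[
\Prob\!\Big(\max_{x,k} |D_k(x)| \geq 4\sqrt{N}\Big) \leq 2 N \cdot 2^N \cdot e^{-N-1} = \frac{2N}{e} \cdot (2/e)^N.
\]

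Since $2/e < 1$, the right-hand side decays geometrically in $N$ and is therefore bounded by $N^{-2}$ for all $N$ above some threshold; this is the desired conclusion. The main technical point to verify is that the combinatorial factor $N \cdot 2^N$ in the union bound is comfortably dominated by the Gaussian tail $e^{-N^2/(N-1)}$ at the specific cutoff $4\sqrt{N}$; the exponent $N^2/(N-1) \approx N$ exceeds $N \log 2 \approx 0.693 N$ with substantial room, which is what makes the choice of the constant $4$ work. For the smallest values of $N$ in the intended range, one may need to invoke the sharper Mills-ratio form of the Gaussian tail $\Prob(|Z| \geq t) \leq (2\sigma/(t\sqrt{2\pi})) e^{-t^2/(2\sigma^2)}$ to squeeze the bound down to $N^{-2}$, but this is a routine refinement rather than a conceptual obstacle.
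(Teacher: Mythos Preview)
Your approach is the same as the paper's: write $D_k(x)$ explicitly as a centered Gaussian, apply a tail bound, and union-bound over the $N\cdot 2^N$ pairs. Your variance $8(N-1)/N$ is in fact correct; the paper records $4(N-1)/N$, apparently dropping a factor of two from the independent contributions of $g_{jk}$ and $g_{kj}$.

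The gap is purely in the constants for small and moderate $N$. Your union bound requires $(2N/e)(2/e)^N \leq N^{-2}$, i.e., $2e^{-1}N^{3}(2/e)^N \leq 1$; since $\log(e/2)\approx 0.307$, this only holds once $N\gtrsim 35$. The Mills-ratio refinement you propose gains only a factor of order $1/\sqrt{N}$, so the requirement becomes roughly $N^{5/2}(2/e)^N \lesssim 1$, which still fails for $3\lesssim N\lesssim 18$; the suggestion that it ``squeezes the bound down to $N^{-2}$'' for the remaining small $N$ is therefore not correct. The paper handles the constants by first fixing $t=\sqrt{(2\log 2)N+8\log N}$ so that the union bound yields exactly $2N^{-3}\leq N^{-2}$, and only then checking $\sigma t\leq 4\sqrt{N}$---though with the correct variance $8(1-1/N)$ that final check also fails for a range of moderate $N$. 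In short, the lemma as literally stated for all $N\geq 2$ appears to need a constant somewhat larger than $4$; for its sole downstream use (Theorem~2, stated with probability $1-O(N^{-2})$ as $N\to\infty$), your large-$N$ argument already suffices.
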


\begin{proof}
By definition, $H(x) = \sum_{i, j \in [N]} g_{ij} x_i x_j$. The energy difference $D_k (x)$ therefore satisfies
\begin{align}
D_k (x) &= \sum_{i , j \in [N]} g_{ij} x_i^{(k)} x_j^{(k)} - \sum_{l, m \in [N]} g_{lm} x_l x_m \nonumber\\ 
&= \sum_{j \in [N] \setminus\{k\}} -2 (g_{jk} + g_{kj}) x_j x_k.\label{eq:diff_formula}
\end{align}
The couplings $g_{jk}$ are i.i.d.\ $\mathcal{N}(0,1/N)$ random variables, so the preceding expression shows that $D_k (x)$ is a sum of $N-1$ i.i.d.\ $\mathcal{N}(0,4/N)$ random variables. Hence, if $\sigma^2 = 4(1-\frac{1}{N})$, then $D_k (x)/\sigma$ is a standard normal random variable, which satisfies the tail bound
\begin{equation}\label{eq:d tail bd}
\P \left( |D_k (x)/\sigma| > t \right) \leq 2 e^{-t^2/2}, \quad t \geq 0.
\end{equation}

Anticipating a sum of the tail bound over $N2^N$ possible choices of state $x$ and coordinate $k$, we set
\begin{equation}
t = \sqrt{(2\log 2)N + 8 \log N},
\end{equation}
and define $\mathcal{E}_N (x,k)$ to be the event that $D_k (x)/\sigma$ exceeds~$t$ in absolute value:
\begin{equation}
\mathcal{E}_N (x,k) = \left\{ |D_k (x)/\sigma| > t \right\}.
\end{equation}
We use $\mathcal{E}_N$ to denote the event that some $\mathcal{E}_N (x,k)$ occurs, and bound the probability that it occurs as
\begin{align}
\P (\mathcal{E}_N) &\leq \sum_{x \in S,\, k \in [N]} \P (\mathcal{E}_N (x,k)) \nonumber\\
&\leq N 2^N \cdot \P \left( |D_{1}(\mathbf{1})/\sigma| > t \right) \nonumber\\ 
&\leq N 2^N \cdot 2 e^{-(\log 2)N - 4\log N} \nonumber\\ 
&= 2N \cdot N^{-4} \leq N^{-2}.
\end{align}
The first inequality is a union bound over $x$ and $k$. The second bound holds by the definition of the event $\mathcal{E}_N (x,k)$ and the fact that the $k$th energy difference of state $x$ has the same distribution as the first energy difference of state $\mathbf{1} = (1,\dots,1) \in \{-1,1\}^N$. The third inequality is due to the tail bound in \cref{eq:d tail bd}. The last inequality holds by the assumption that $N \geq 2$. Hence, except with a probability of $1-N^{-2}$, 
\begin{equation}
\max_{x \in S,\, k \in [N]} |D_k (x)| \leq \sigma t \leq 4 \sqrt{N}.
\end{equation}
\end{proof}

\subsection{Approximation of \texorpdfstring{$A$}{A} by \texorpdfstring{$\wt A$}{tilde A}}

Following the discussion at the beginning of \cref{app:sk}, we proceed to bound the variance of $A(X) - \wt A(X)$. We do so using the approximation in \cref{eq:summand_bound}, which is justified for small $\beta$ due to \cref{lem:e occurs}. We continue to assume that $N \geq 2$ is an integer.

\begin{lemma}\label{lem:vareps}
    As $\beta \to 0$, the quantity $\eps = A(X) - \wt A(X)$ satisfies
    \begin{equation}
    \frac{1}{N}\sqrt{\Var (\eps)} = O (\beta),
    \end{equation}
    with a probability of at least $1 - N^{-2}$.
\end{lemma}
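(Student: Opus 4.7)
The plan is to work on the event from \cref{lem:e occurs}, on which all absolute energy differences satisfy $|D_k(x)| \leq 4\sqrt{N}$ and which has probability at least $1 - N^{-2}$. On this event, I will establish the pointwise bound $|\eps(x)| = O(\beta N)$ uniformly in $x \in S$. This suffices, because $\Var(\eps) \leq \E[\eps^2] \leq \max_{x \in S} \eps(x)^2$, so that $\frac{1}{N}\sqrt{\Var(\eps)} = O(\beta)$ as required.

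To establish the pointwise bound, I would proceed in two expansion steps. First, since $|\beta(1-\lambda)D_k(x)| \leq 4\beta\sqrt{N}$ is eventually smaller than any prescribed constant as $\beta \to 0$ with $N$ fixed, Taylor's theorem applied to $e^u$ at $u=0$ gives
\[
e^{-\beta(1-\lambda)D_k(x)} = 1 - \beta(1-\lambda)D_k(x) + r_k(x),
\]
with remainder satisfying $|r_k(x)| = O(\beta^2 D_k(x)^2) = O(\beta^2 N)$, uniformly in $x$ and $k$. Summing over $k \in [N]$ yields
\[
\sum_{k \in [N]} e^{-\beta(1-\lambda) D_k(x)} = N \bigl(1 + Y(x)\bigr),
\]
where $Y(x) = -\frac{\beta(1-\lambda)}{N} \sum_k D_k(x) + O(\beta^2 N)$ and, in particular, $|Y(x)| = O(\beta \sqrt{N})$ uniformly in $x$.

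Second, I would apply Taylor's theorem to $\log(1+u)$ at $u=0$, which is valid for $\beta$ small enough in terms of $N$, to obtain
\[
\log\bigl(1 + Y(x)\bigr) = Y(x) + O\bigl(Y(x)^2\bigr) = -\frac{\beta(1-\lambda)}{N} \sum_k D_k(x) + O(\beta^2 N),
\]
again uniformly in $x$. Dividing by $\beta$ and subtracting $\wt A(x) = \frac{1}{\beta} \log N - \frac{1-\lambda}{N} \sum_k D_k(x)$, the linear terms cancel and I am left with $|\eps(x)| = |A(x) - \wt A(x)| = O(\beta N)$ uniformly in $x$, closing the argument via the variance bound above.

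The main obstacle is careful bookkeeping of the two Taylor remainders. Both the quadratic error from expanding the exponential and the square of the linear term from expanding the logarithm contribute to $\eps$ at exactly order $\beta N$ after dividing by $\beta$; one must verify that neither of these carries hidden factors of $N$ that would spoil the claimed bound. This is straightforward on the good event, where the uniform bound $|D_k(x)| \leq 4\sqrt{N}$ makes every remainder controllable by an elementary inequality, so no probabilistic input beyond \cref{lem:e occurs} is required.
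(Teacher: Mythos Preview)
Your proposal is correct and follows essentially the same approach as the paper: work on the high-probability event from \cref{lem:e occurs}, Taylor-expand first the exponential and then the logarithm to obtain the uniform pointwise bound $|\eps(x)| = O(\beta N)$, and conclude via $\Var(\eps) \leq \E(\eps^2)$. The paper's version is terser and collapses the two expansion steps into a single display, but your more explicit bookkeeping of the two remainders is exactly what is implicitly being done there.
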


\begin{proof}
    Suppose that the largest energy difference $D_k (x)$ is at most $4\sqrt{N}$ in absolute value. According to \cref{lem:e occurs}, this event occurs with a probability of at least $1 - N^{-2}$. When it occurs, we can take $\beta \to 0$ and expand $A$ as
    \begin{align}
    A(x) &= \frac{1}{\beta} \log \sum_{k \in [N]} \left(1 - \beta (1-\lambda) D_k (x) + N \cdot O(\beta^2) \right) \nonumber\\ 
    &= \frac{1}{\beta} \log N - \frac{1-\lambda}{N} \sum_{k \in [N]} D_k (x) + N \cdot O(\beta).
    \end{align}
    Hence, $\eps = N \cdot O(\beta)$ and therefore
    \begin{equation}
    \frac{1}{N} \sqrt{\Var(\eps)} \leq \frac{1}{N} \sqrt{\E(\eps^2)} = O(\beta).
    \end{equation}
\end{proof}

\subsection{Variance and covariance calculations}\label{app:deferred_moment_estimates}

The next inputs to the proof of \cref{thm:sk} are the following expressions for the variances and covariance of $H(X)$ and $\wt A(X)$. We will use these expressions both to calculate $\wt\rho$ and to show that $\wt\rho$ is close to $\rho$.

\begin{lemma}\label{lem:sk_moments}
    Given couplings $(g_{ij})_{i,j \in [N]}$, the variances and covariance of $H(X)$ and $\wt A(X)$ satisfy
    \begin{align}
        \Var(H(X)) &= \sum_{1 \leq i < j \leq N} (g_{ij} + g_{ji})^2,\\ 
        \Var(\wt A(X)) &= b^2 \Var (H(X)), \\ 
        \Cov(H(X),\wt A(X)) &= b \Var (H(X)),
    \end{align}
    where $b = 4(1-\lambda)/N$.
\end{lemma}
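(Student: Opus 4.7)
The plan is to observe that, as random variables depending on the uniform $X \in \{-1,1\}^N$, the quantities $H(X)$ and $\wt A(X)$ are in fact affinely related: there exists a deterministic constant $C$ (depending only on the couplings) such that
\[
\wt A(X) = b \, H(X) + C.
\]
Once this is established, the identities $\Var(\wt A(X)) = b^2 \Var(H(X))$ and $\Cov(H(X),\wt A(X)) = b \Var(H(X))$ are immediate from the bilinearity of variance and covariance. The remaining claim, the formula for $\Var(H(X))$, will follow from a routine orthogonality calculation for Rademacher-type polynomials.

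To establish the affine relation, I would first simplify $H$ using $X_i^2 = 1$, writing
\[
H(X) = \sum_{i \in [N]} g_{ii} + \sum_{1 \leq i < j \leq N} c_{ij}\, X_i X_j, \qquad c_{ij} := g_{ij} + g_{ji}.
\]
Next, I would substitute the formula $D_k(X) = -2 \sum_{j \neq k} c_{jk} X_j X_k$ from \cref{eq:diff_formula} into the definition of $\wt A$. Interchanging the sums and exploiting the symmetry $c_{jk} = c_{kj}$ (so that each unordered pair $\{j,k\}$ is counted twice in the double sum) yields
\[
\wt A(X) = \frac{\log N}{\beta} + \frac{4(1-\lambda)}{N} \sum_{1 \leq i < j \leq N} c_{ij}\, X_i X_j.
\]
Comparing these two displays shows $\wt A(X) - b\, H(X) = \tfrac{1}{\beta}\log N - b \sum_i g_{ii}$, which is deterministic given the couplings, proving the affine relation and hence the last two identities of the lemma.

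For the first identity, I would invoke the fact that the monomials $\{X_i X_j\}_{1 \leq i < j \leq N}$ form an orthonormal family in $L^2$ with respect to the uniform measure on $\{-1,1\}^N$: each such product has mean zero and variance one, and $\E[X_i X_j X_k X_l] = 0$ whenever $\{i,j\} \neq \{k,l\}$ (because some coordinate then appears to an odd power and its marginal has mean zero). Applying this to the expression for $H(X)$ above gives
\[
\Var(H(X)) = \sum_{1 \leq i < j \leq N} c_{ij}^2 = \sum_{1 \leq i < j \leq N} (g_{ij} + g_{ji})^2,
\]
as claimed. There is no substantive obstacle in this argument; the only step demanding care is the combinatorial bookkeeping that converts the ordered double sum $\sum_{k}\sum_{j \neq k}$ appearing in $\wt A$ into a single sum over the pairs $i < j$, since an off-by-two factor there would spoil the identification of the constant $b$.
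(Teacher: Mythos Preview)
Your proof is correct and traces the same computation as the paper's, but you organize it more cleanly: you first establish the affine relation $\wt A(X) = b\,H(X) + C$, from which the second and third identities are immediate, whereas the paper derives $\Var(\wt A(X))$ and $\Cov(H(X),\wt A(X))$ separately (though via the same manipulation of $\sum_k D_k(X)$, ultimately observing that $\sum_{i\neq j} g_{ij}X_iX_j$ and $H(X)$ differ by a constant). For $\Var(H(X))$, the paper expands $\E(H(X)^2)$ by case analysis on the index multiset $\{i,j,k,l\}$, while you invoke the orthonormality of the degree-two characters $\{X_iX_j\}_{i<j}$; the paper explicitly notes this Parseval-style route as an alternative. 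The two arguments are equivalent in substance, and your bookkeeping on the double sum is correct.
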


We note that since $H$ is a pseudo-Boolean function, i.e., a real-valued function on the hypercube $\{-1,1\}^N$, it is possible to calculate $\Var H(X)$ using Parseval's theorem \cite{ODonnell2014}. Instead, we calculate the moments directly, both for the sake of exposition and to facilitate the subsequent calculation of $\Var (\wt A(X))$.

\begin{proof}[Proof of \cref{lem:sk_moments}]
We start with the variance of $H(X) = \sum_{i,j \in [N]} g_{ij} X_i X_j$. Its first moment equals
\begin{equation}
    \E (H(X)) = \sum_{i,j \in [N]} g_{ij} \E (X_i X_j) = \sum_{i \in [N]} g_{ii},
\end{equation}
because $\E(X_i X_j)$ equals $1$ when $i = j$, and $0$ otherwise. The second moment satisfies
\begin{equation}
    \E \left( H(X)^2 \right) = \sum_{i,j,k,l \in [N]} g_{ij}g_{kl} \E (X_i X_j X_k X_l).
\end{equation}
The quantity $\E (X_i X_j X_k X_l)$ equals $0$ unless $\{i,j,k,l\}$ contains one or two distinct coordinates, in which case it equals $1$. By considering the various cases, we find that
\begin{equation}
    \E \left( H(X)^2 \right) = \sum_{i \in [N]} g_{ii}^2 + \sum_{i \neq j} (g_{ii} g_{jj} + g_{ij}^2 + g_{ij} g_{ji}).
\end{equation}
Hence, the variance satisfies
\begin{align}
    \Var (H(X)) &= \sum_{i \neq j} (g_{ij}^2 + g_{ij} g_{ji}) \nonumber \\ 
    &= \sum_{1 \leq i < j \leq N} (g_{ij} + g_{ji})^2.
\end{align}

Next, we calculate $\Var (\wt A (X))$, as follows:
\begin{align}
\Var (\wt A(X)) &= \Var \Bigg( \frac{1}{\beta} \log N - \frac{1-\lambda}{N} \sum_{k \in [N]} D_k (X) \Bigg) \nonumber\\ 
&= \frac{(1-\lambda)^2}{N^2} \Var \Bigg( \sum_{k \in [N]} D_k (X) \Bigg) \nonumber\\ 
&= \frac{4 (1-\lambda)^2}{N^2} \Var \Bigg( \sum_{i \neq j} (g_{ij} + g_{ji}) X_i X_j \Bigg) \nonumber\\ 
&= \frac{16 (1-\lambda)^2}{N^2} \Var \Bigg( \sum_{i \neq j} g_{ij} X_i X_j \Bigg) \nonumber\\ 
&= b^2 \Var (H(X)).
\end{align}
The first equality is due to the definition of $\wt A(X)$. The second holds because the constant $(\log N)/\beta$ does not affect the variance. The third follows from the formula for $D_k (x)$ in \cref{eq:diff_formula}. The fourth holds because the sums of $g_{ij} X_i X_j$ and $g_{ji} X_i X_j$ are equal. The final equality is due to the fact that the quantities $\sum_{i \neq j} g_{ij} X_i X_j$ and $H(X)$ differ only by the constant $\sum_{i \in [N]} g_{ii}$, which does not affect the variance.

By simply repeating these steps for the covariance, we find that 
\begin{equation}
\Cov (H(X), \wt A(X)) = b \Var (H(X)),
\end{equation}
which concludes the proof.
\end{proof}

Note that, as a consequence of \cref{lem:sk_moments}, we know the distribution of $\Var (H(X))$ exactly.

\begin{lemma}\label{lem:dist_of_varh}
    Let $N \geq 2$ be an integer. The distribution of $\Var(H(X))$ is $\mathrm{Gamma}(\frac{N(N-1)}{4},\frac{4}{N})$.
\end{lemma}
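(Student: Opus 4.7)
The plan is to combine the explicit formula for $\Var(H(X))$ from \cref{lem:sk_moments} with standard facts about sums and squares of independent normal random variables. Specifically, \cref{lem:sk_moments} gives
\[
\Var(H(X)) = \sum_{1 \leq i < j \leq N} (g_{ij}+g_{ji})^2,
\]
so it suffices to identify the distribution of the right-hand side given that the couplings are i.i.d.\ $\mathcal{N}(0,1/N)$.

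First I would note that for each pair $i<j$, the random variable $g_{ij}+g_{ji}$ is a sum of two independent centered normals of variance $1/N$, and is therefore $\mathcal{N}(0,2/N)$. Hence $(g_{ij}+g_{ji})^2$ has the same distribution as $(2/N) Z^2$ with $Z \sim \mathcal{N}(0,1)$, so it is a $\mathrm{Gamma}(1/2, 4/N)$ random variable in the shape--scale parameterization (equivalently, $(2/N)$ times a $\chi^2_1$).

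Second, I would observe that distinct unordered pairs $\{i,j\}$ correspond to disjoint pairs of couplings $\{g_{ij}, g_{ji}\}$, so the $\binom{N}{2} = N(N-1)/2$ summands are mutually independent. Applying the additivity of independent Gamma random variables with a common scale parameter -- namely, that a sum of $n$ independent $\mathrm{Gamma}(\alpha, \theta)$ random variables is $\mathrm{Gamma}(n\alpha,\theta)$ -- I would conclude
\[
\sum_{1 \leq i < j \leq N}(g_{ij}+g_{ji})^2 \sim \mathrm{Gamma}\!\left(\frac{N(N-1)}{2}\cdot\frac{1}{2},\; \frac{4}{N}\right) = \mathrm{Gamma}\!\left(\frac{N(N-1)}{4},\; \frac{4}{N}\right),
\]
which is exactly the claimed distribution.

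There is no real obstacle here; the only thing to be careful about is bookkeeping the Gamma parameterization (shape vs.\ rate vs.\ scale) consistently with the convention used in the paper, and explicitly noting the independence of the pair-summands since each coupling $g_{ij}$ appears in only one unordered-pair term.
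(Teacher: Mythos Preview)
Your proposal is correct and matches the paper's own proof essentially step for step: both invoke \cref{lem:sk_moments} for the sum-of-squares formula, identify each $(g_{ij}+g_{ji})^2$ as $\mathrm{Gamma}(\tfrac12,\tfrac4N)$, note that the $\binom{N}{2}$ summands are independent, and add the shape parameters. Your version is slightly more explicit about the independence of the pair-summands and the parameterization convention, which is only an improvement.
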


\begin{proof}
    Since the couplings are i.i.d.\ $\mathcal{N}(0,\frac{1}{N})$ random variables, the quantity $(g_{ij} + g_{ji})^2$ has a $\mathrm{Gamma}(\frac{1}{2},\frac{4}{N})$ distribution. By \cref{lem:sk_moments}, $\Var (H(X))$ is a sum of $\frac{N(N-1)}{2}$ independent such random variables, hence it has a $\mathrm{Gamma}(\frac{N(N-1)}{4},\frac{4}{N})$ distribution.
\end{proof}

A consequence of \cref{lem:dist_of_varh} is that $\Var(H(X))$ has a mean of $N-1$ and a variance of $4(1-\frac{1}{N})$ with respect to the randomness of the couplings, so it rarely takes values less than $(N-1)/2$. Specifically, the standard Chernoff bound for gamma random variables implies that
\begin{equation}\label{eq:chernoff bd}
\P \left(\Var (H(X)) < \frac{N-1}{2} \right) \leq e^{-c N (N-1)},
\end{equation}
where $c = (\log 2 - 1/2)/4 \approx 0.048$.

\subsection{Conclusion}

We now combine the results from the two preceding subsections to prove \cref{thm:sk}.

\begin{proof}[Proof of \cref{thm:sk}]

For the moment, assume that the couplings are given. We write $\rho$ as $\Corr (U,V)$, where
\begin{equation}
U = H(X) \quad \text{and} \quad V = (2\lambda-1)H(X) + A(X).
\end{equation}
Analogously, we write $\wt\rho = \Corr (U,\wt V)$, in terms of
\begin{equation}
\wt V = (2\lambda-1) H(X) + \wt A(X).
\end{equation}
By \cref{lem:sk_moments}, the corresponding covariance satisfies
\begin{align}
\Cov (U,\wt V) &= (2\lambda-1) \Var (H(X)) + \Cov(H(X),\wt A(X)) \nonumber\\ 
&= \left( (2\lambda - 1) + b \right) \Var (H(X)), \label{eq: sk cov uv tilde}
\end{align}
in terms of $b = 4(1-\lambda)/N$. Similarly, the variance of $\wt V$ satisfies
\begin{multline}
\Var (\wt V) = (2\lambda-1)^2 \Var (H(X)) + \Var (\wt A(X))\\ + 2 (2\lambda-1) \Cov (H(X),\wt A(X)),
\end{multline}
so \cref{lem:sk_moments} implies that
\begin{align}
\Var (\wt V) &= \left( (2\lambda - 1)^2 + 2(2\lambda -1 ) b + b^2 \right) \Var (H(X)) \nonumber\\ 
&= ((2\lambda - 1) + b)^2 \Var (H(X)).  \label{eq: sk var v tilde}
\end{align}
Let $c = |2\lambda - 1 + b| \sqrt{\Var (H(X))}$. \cref{eq: sk cov uv tilde} shows that 
\begin{equation}\label{eq:verif hypos1}
| \Cov (U, \wt V) | \geq c \sqrt{\Var (U)},
\end{equation}
while \cref{eq: sk var v tilde} establishes that
\begin{equation}\label{eq:verif hypos2}
\Var(\wt V) \geq c^2.
\end{equation}
Note that
\begin{equation}
2\lambda - 1 + b = 2\left( 1 - \frac{2}{N} \right) (\lambda-\lambda_\ast),
\end{equation}
so $c$ depends on the absolute distance of $\lambda$ from the location of the threshold $\lambda_\ast$ in \cref{thm:sk}.

Next, to verify the hypotheses of \cref{lem:corr_perturb}, we address the randomness of the couplings. \cref{lem:vareps} and \cref{eq:chernoff bd} together imply that the event consisting of
\begin{align}
\frac{1}{N}\sqrt{\Var(\eps)} &= O(\beta) \,\, \text{as $\beta \to 0$}, \nonumber \\ 
\Var (H(X)) &\geq \frac{N-1}{2}
\end{align}
occurs with a probability of at least $1 - O(N^{-2})$ as $N~\to~\infty$. When it does, $c$ is positive for all $\lambda \neq \lambda_\ast$ because 
\begin{align}
c &\geq 2 \left(1 - \frac{2}{N} \right) |\lambda-\lambda_\ast| \sqrt{\frac{N-1}{2}} \nonumber\\ 
&\geq \frac{2}{3\sqrt{3}} |\lambda - \lambda_\ast| \sqrt{N}.
\end{align}
Note that the second inequality follows from the simple bounds $(1-2/N) \geq 1/3$ and $\sqrt{(N-1)/2} \geq \sqrt{N/3}$, which hold for all $N \geq 3$. \cref{eq:verif hypos1,eq:verif hypos2} show that $U$ and $\wt V$ satisfy the hypotheses of \cref{lem:corr_perturb} for this choice of $c$, for all $\lambda \neq \lambda_\ast$. \cref{lem:corr_perturb} then implies that, as $\beta \to 0$, $\rho$ equals
\begin{equation}
\wt \rho \left(1 \pm \frac{5}{c} \sqrt{\Var (\eps)} \right) = \wt\rho \left(1 + \frac{\sqrt{N}}{|\lambda-\lambda_\ast|} \cdot O(\beta) \right),
\end{equation}
for all such $\lambda$. This completes the proof, since $\wt\rho = \mathrm{sign}(\lambda-\lambda_\ast)$ for all $\lambda \neq \lambda_\ast$ by \cref{eq:sk rho tilde}.
\end{proof}

\bibliographystyle{apsrev4-2}

%

\end{document}